\def\BibTeX{{\rm B\kern-.05em{\sc i\kern-.025em b}\kern-.08em
    T\kern-.1667em\lower.7ex\hbox{E}\kern-.125emX}}
\newtheorem{theorem}{Theorem}
\newtheorem{lemma}{Lemma}[theorem]
\newtheorem{insight}{Finding}
\begin{document}

\title{\textsc{Mell}: Memory-Efficient Large Language Model Serving via Multi-GPU KV Cache Management
\thanks{}
}

\author{

\IEEEauthorblockN{Qianli Liu$^{1}$, Zicong Hong$^{1}$, Peng Li$^{2}$, Fahao Chen$^{3}$ and Song Guo$^{1}$}\IEEEauthorblockA{$^1$Department of Computer Science and Engineering, The Hong Kong University of Science and Technology, Hong Kong \\$^2$School
of Cyber Science and Engineering, Xi’an Jiaotong University, China\\$^3$School of Computer Science and Engineering, University of Aizu, Japan
\\qianli.liu@connect.ust.hk, ziconghong@gmail.com, pengli@xjtu.edu.cn, chenfh@ieee.org,
songguo@cse.ust.hk}

\thanks{

This research was supported by fundings from the Hong Kong RGC General Research Fund (152244/21E, 152169/22E, 152228/23E, 162161/24E), Research Impact Fund (No. R5011-23, No. R5060-19), Collaborative Research Fund (No. C1042-23GF), Theme-based Research Scheme (T43-518/24-N), National Natural Science Foundation of China (No. 62471383), Areas of Excellence Scheme (AoE/E-601/22-R), and Hong Kong Generative AI Research and Development Center from InnoHK. Corresponding authors: Zicong Hong, Song Guo. 

% This research was supported by fundings from the Key-Area Research and Development Program of Guangdong Province (No. 2021B0101400003), Hong Kong RGC Research Impact Fund (No. R5060-19, No. R5011-23), Areas of Excellence Scheme (AoE/E-601/22-R), General Research Fund (No. 152203/20E, 152244/21E, 152169/22E, 152228/23E, 16216124/24E), Collaborative Research Fund (No. 8730102), National Natural Science Foundation of China (No. 62471383), and Hong Kong Generative AI Research and Development Center from InnoHK. Corresponding authors: Zicong Hong, Song Guo. 
}
}

\maketitle

\begin{abstract}
Serving large language models (LLMs) for massive users is challenged by the significant memory footprint of the transient state, known as the \emph{key-value (KV) cache}, which scales with sequence length and number of requests. 
Instead of renting or buying more expensive GPUs, the load imbalance of the KV cache across GPUs, coupled with recent advances in inter-GPU communication, provides an opportunity to serve more requests via request migration. 
However, high migration overhead and unpredictable request patterns make it challenging. 
Therefore, this paper proposes \textsc{Mell}, a memory-efficient LLM serving system via \emph{multi-GPU KV cache management}. 
It saves the number of GPUs needed in the system by considering the dynamic KV cache load and the costly request migration. 
Specifically, we first develop an adaptive request migration mechanism to balance the computational and communication overheads and adapt to diverse resource conditions. 
Then, we design an online algorithm tailored to a multi-LLM request and multi-GPU scheduling problem with migration enabled.
It aims to minimise the required GPUs while limiting the number of migrations.
% and achieves a competition ratio of $4/3$. 
Finally, we implement a prototype of \textsc{Mell} and demonstrate that it reduces the number of GPUs by $31\%$ and increases the GPU utilization by $43\%$ at most compared to existing LLM serving systems.
\end{abstract}

\begin{IEEEkeywords}
large language model serving, key-value cache.
\end{IEEEkeywords}

\section{Introduction}
\label{sec:introduction}

The capability of Large Language Models (LLMs)~\cite{NEURIPS2020_1457c0d6,openai2024gpt4technicalreport,touvron2023llamaopenefficientfoundation} to understand and produce human-like text has established them as a central component of AI, dramatically improving many complex language-related tasks across industries. 
As the use of LLMs becomes more widespread, it is essential to deploy them on GPU clusters with a large number of GPUs~\cite{295545} and provide users with seamless access~\cite{orca,patel2024splitwise,10.1145/3627703.3629567,10229061,galaxy}.

To improve the LLM inference efficiency on GPUs, \emph{key-value (KV) cache} is one of the most critical modules~\cite{MLSYS2023_c4be71ab}.
It stores the keys and values of all previous tokens in GPU memory for each LLM inference to avoid redundant and repeated computations. 
Despite this advantage, there is a problem with GPU memory during long context processing and generation. Unlike the model weights, the KV cache is subject to size growth due to sequence length and batch size. 
As the demand for longer sequence lengths (along with larger batch sizes) grows~\cite{chen2024longlora}, the KV cache size problem becomes more pronounced.
Statistics show that the KV cache now often consumes over 30\% of the GPU memory~\cite{pagedattention}.

Memory management is important for accommodating more KV cache without renting or buying more expensive GPUs. Existing KV cache management works either compress the KV cache~\cite{keyformer,h2o,xiao2024efficient,liu2023scissorhands,liu2024kivi,ge2024model} or offload the KV cache to CPU memory~\cite{flexgen,InfiniGen,attentionstore,deepspeed_inference}. 
However, the former inevitably degrades LLM performance through quantization or sparsity, while the IO bottleneck between CPU memory and GPU limits the latter. 

To avoid these problems, we observe that the size of the KV caches on each GPU varies over time. 
Particularly, some GPUs are overwhelmed by the growth of the KV cache from running requests, while others have a lot of unused memory because the KV cache is released for completed requests.
This motivates us to schedule LLM requests with their KV cache from a heavily loaded GPU to a less loaded GPU to avoid renting or buying a new GPU. 
According to our preliminary experiments (see \textbf{Finding 3} in \autoref{sec:motivation}), such migration allows an LLM serving system to handle at most 60\% more LLM requests than that without migration.
A few works have demonstrated the feasibility of migrating requests across GPUs without significant service halt (i.e., live migration), e.g., Llumnix~\cite{llumnix} and ServerlessLLM~\cite{serverlessLLM}.

However, there are three challenges that need to be addressed regarding scheduling.
\textbf{1)~Unpredictable request patterns}: Besides the arrival time, the processing time of requests is difficult to learn due to the unpredictability of LLM response length.
Moreover, resource utilization changes as requests are processed due to updates in the KV cache.
\textbf{2)~High migration overhead}: Existing migration is either compute-intensive~\cite{serverlessLLM} or communication-intensive~\cite{llumnix} due to the KV cache transfer or re-prefill, respectively. Thus, the scheduling needs to balance the computational and communication overhead caused by migration. 
\textbf{3)~Theoretical guarantee}: most of the existing scheduling is based on a heuristic design (e.g., load swapping between GPUs with lowest load and highest load repeatedly~\cite{llumnix}) without a theoretical performance guarantee.

To solve these challenges, this paper proposes \textsc{Mell}, a memory-efficient LLM serving system enabled by a novel multi-GPU KV cache management. 
It perceives the system's dynamic KV cache load and resources, decides on the placement of LLM requests during their processing, and efficiently migrates the requests' KV cache to save the number of GPUs.
% We aim to answer a critical technical question: \textbf{\emph{how to minimize the number of GPUs needed to accommodate KV caches for a sequence of incoming LLM requests?}}

We summarize our contribution as follows.
\begin{itemize}
    \item We design an adaptive request migration mechanism by switching in real-time between the token and KV cache migration to balance computational and communication overheads for the dynamic environment.
    \item We develop an online KV cache scheduling algorithm in a multi-request multi-GPU environment to minimize the number of GPUs needed and limit the number of migrations.
    It has been rigorously proved to have a competitive ratio with the optimal solution of $4/3$ at most.
    \item We implement a prototype of \textsc{Mell} and demonstrate that it substantially reduces the number of GPUs by $9\% \sim 31\%$ and increases the GPU utilization by $10\% \sim 43\%$ compared to the existing LLM serving systems.
\end{itemize}
% reduces the number of GPUs by $9\% \sim 31\%$ and increases the GPU utilization by $10\% \sim 43\%$
\section{Background \& Related Work} % half page

Modern LLMs, such as GPT~\cite{openai2024gpt4technicalreport} and LLaMA~\cite{touvron2023llamaopenefficientfoundation}, are based on the Transformer architecture and employ a decoder-only structure. 
\autoref{fig:kv_cache} shows a three-layer LLM, where nodes and edges indicate Transformer layers and dependencies between the layers, respectively. 
The Transformer layers are executed in the order denoted by the numbers, and the nodes that use the same set of model parameters (i.e., nodes representing the same layer) are filled with the same colour~\cite{TurboTransformers}.

\begin{figure}
    \centering
    \includegraphics[width=0.8\linewidth]{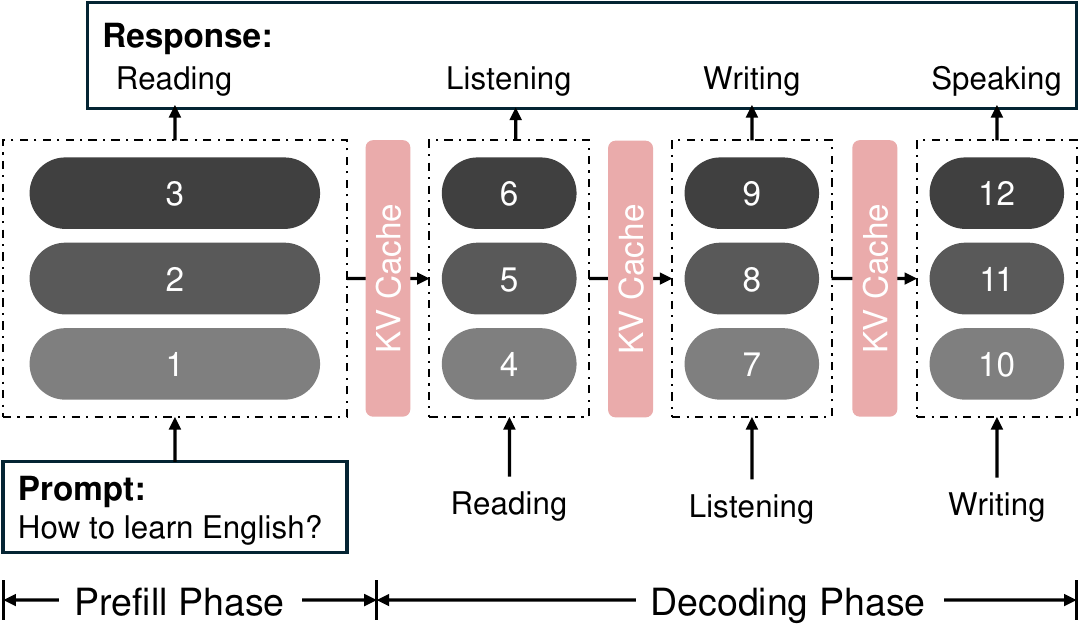}
    \caption{Serving procedure of an LLM request.}
    \label{fig:kv_cache}
\end{figure}

The processing of each LLM request is logically divided into a \emph{prefill} phase and a \emph{decoding} phase.
In the prefill phase, all input tokens designated as \emph{prompt} are processed in parallel. 
This phase generates the initial output token while storing the intermediate results of computed keys and values in the GPU memory, collectively referred to as the \emph{KV cache}. 
For example, in \autoref{fig:kv_cache}, a prompt ``How to learn English?'' generates the first token ``Reading'' and the KV cache.
The decoding phase then utilises this KV cache to generate new tokens autoregressively (i.e., ``Listening'', ``Writing'', and ``Speaking''), incorporating new keys and values into the KV cache. 

Despite avoiding recomputation, the KV cache exacerbates the huge memory consumption of LLMs. 
It has therefore been an active area of research in recent years, with numerous LLM serving systems proposed to address various aspects of KV cache management. 
\textbf{1) KV Cache Compression.} 
Substantial works save the memory consumption of the KV cache via quantization and sparsity~\cite{keyformer, ge2024model,h2o,xiao2024efficient,liu2023scissorhands,liu2024kivi}.
\textbf{2) Memory Management for KV Cache.}
To increase GPU utilization, several works propose efficient memory management for the KV cache~\cite{InfiniGen}.
Kwon \textit{et al.} propose PagedAttention~\cite{pagedattention} that allows KV cache to be stored in non-contiguous paged memory, reducing memory fragmentation.
Gao \textit{et al.} propose a hierarchical KV caching system that utilizes cost-effective storage media to store more KV caches~\cite{attentionstore}. 
\textbf{3) Request Migration across GPUs.}
To fully utilize the compute and memory resource, several works disaggregate each request's prefill and decode phase into separate GPUs~\cite{Taming,DistServe,patel2024splitwise,hu2024inferenceinterferencedisaggregatellm}.
However, they fix the placement of LLM requests during the memory-intensive decoding phase, even if there is a significant KV cache load imbalance.
Instead, Sun \textit{et al.} propose Llumnix~\cite{llumnix}, an LLM serving system that supports live migration for the KV cache across GPUs during the decoding phase. 
In other words, it introduces near-zero downtime by pipelining the computation and memory transfer.
Similarly, Fu \textit{et al.} propose ServerlessLLM~\cite{serverlessLLM}, a serverless LLM serving system via a two-stage live migration.
Moreover, Wu \textit{et al.} co-migrate requests and adapters for a LoRA LLM serving system~\cite{dlora}.
% However, ...

The first two types of work above focus on optimising the management of the KV cache within a single GPU, which is orthogonal to ours. 
Integrating these works can improve the memory efficiency of each GPU. 
The works most relevant to us are Llumnix~\cite{llumnix} and ServerlessLLM~\cite{serverlessLLM}, which support KV cache migration between GPUs. However, as discussed in \autoref{sec:introduction} and \autoref{sec:motivation}, several challenges need to be addressed, including unpredictable request patterns, high migration overhead, and theoretical guarantee. 
To overcome these challenges, our \textsc{Mell} develops a new multi-GPU KV cache management with an adaptive request migration mechanism for dynamic resource levels and an online KV cache scheduling algorithm that limits the number of GPUs and the number of migrations.

\section{Motivation}
\label{sec:motivation}

This section analyses the KV cache's characteristics, identifies its main bottleneck, and shows the potential for optimisation, which motivates the design of our \textsc{Mell}.

\begin{figure}[t]
    \centering
    \includegraphics[width=0.8\linewidth]{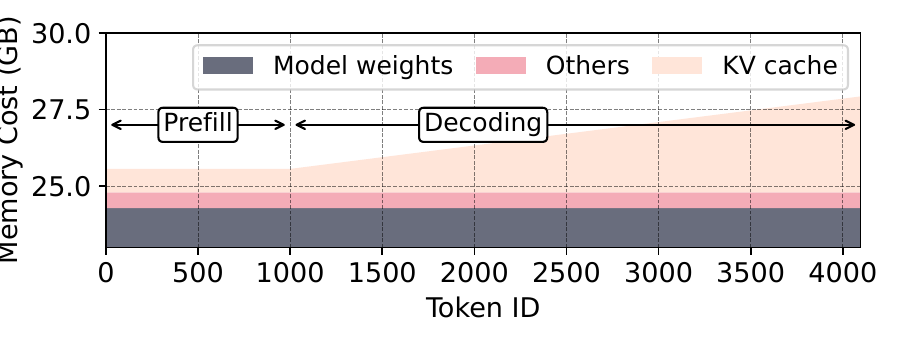}
    \caption{The memory cost of processing a request with $4096$ tokens on LLaMA-13B.}
    \label{fig:kvcache_usage}
\end{figure}

\begin{tcolorbox}
\begin{insight}
KV caches make existing LLM serving systems memory-bound, under-utilising GPU processing power and thus limiting serving throughput.
\end{insight}
\end{tcolorbox}

\begin{figure}[t]
    \centering
    \includegraphics[width=0.8\linewidth]{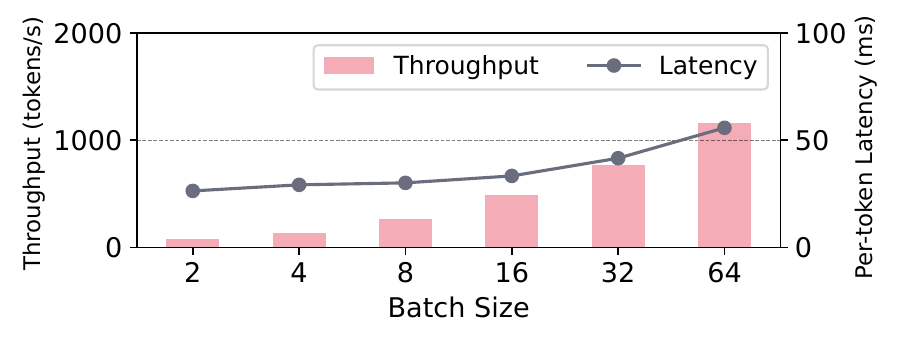}
    \caption{Throughput and per-token decoding latency of serving LLaMA-13B in a prompt length of $100$ as batch size increases.}
    \label{fig:decoding_batch}
\end{figure}

\autoref{fig:kvcache_usage} shows the memory cost of a request fed to the LLaMA-13B model~\cite{touvron2023llamaopenefficientfoundation} on an A100 GPU with $40$ GB of memory.
In the prefill phase, some memory space is pre-allocated for the KV cache according to the prompt length.
In the decoding phase, the size of the KV cache grows linearly with the increasing number of tokens generated.
For a request with a maximum length of $4096$ tokens, the memory cost of the KV cache is about $3.2$ GB.
After storing the model parameters of LLaMA-13B (roughly $24$ GB), the A100 GPU can only support a maximum batch of $5$ requests.
However, as shown in \autoref{fig:decoding_batch}, the latency per token remains relatively stable, and the throughput continues to increase as the batch size grows from $2$ to $16$ when memory is not bounded (in other words, the request length is short).

\begin{figure}[t]
	\centering
	\subfloat[][Vicuna-13B]{
		\begin{minipage}[t]{0.305\linewidth}
			\centering
			\includegraphics[width=\linewidth]{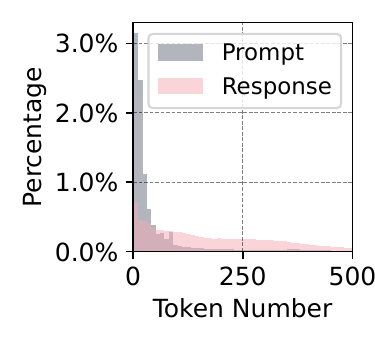}
		\end{minipage}
		\label{fig:input_output_0}
	}
	\hfill
	\subfloat[][Koala-13B]{
		\begin{minipage}[t]{0.305\linewidth}
			\centering
			\includegraphics[width=\linewidth]{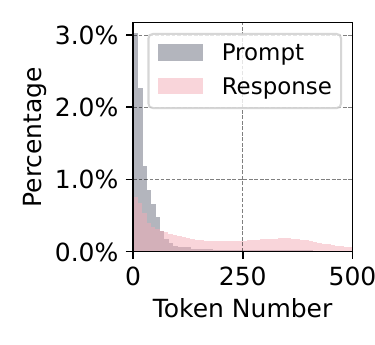}
		\end{minipage}%
		\label{fig:input_output_1}
	}
	\hfill
     \subfloat[][ChatGPT]{
    		\begin{minipage}[t]{0.305\linewidth}
    			\centering
    			\includegraphics[width=\linewidth]{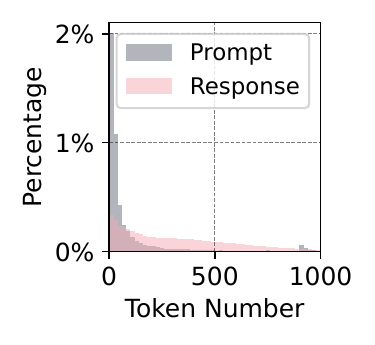}
    		\end{minipage}%
    		\label{fig:input_output_2}
    	}
	\centering
	\caption{The distribution of prompt and response length.}
    \label{fig:input_output}
\end{figure}

\begin{figure}[t]
    \centering
    \includegraphics[width=0.8\linewidth]{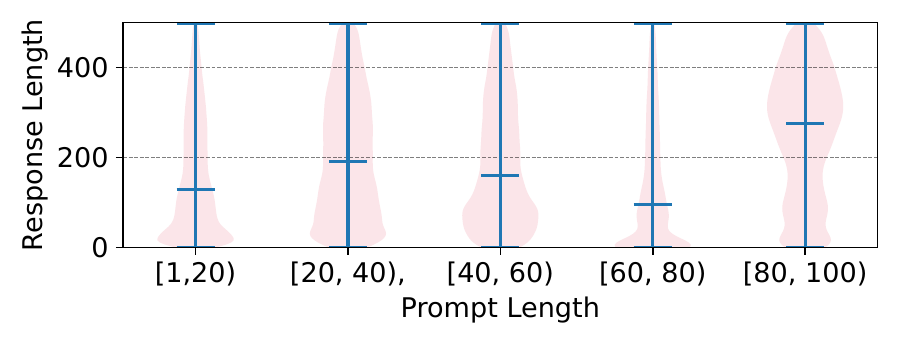}
    \caption{The distribution of response lengths under various prompt lengths in Vicunna-13B.}
    \label{fig:user2bot_0}
\end{figure}

\begin{figure}[t]
    \centering
    \includegraphics[width=0.75\linewidth]{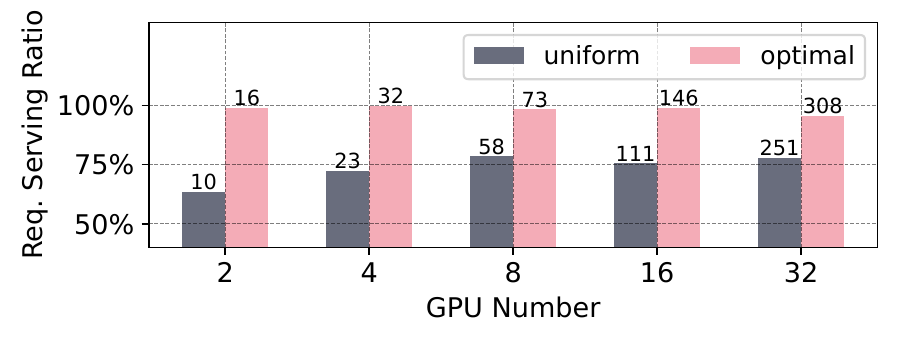}
    \caption{The request serving ratio of an LLM serving system without request migration and with migration enabled. The number above each bar denotes the number of requests served.}
    \label{fig:optimal_random}
\end{figure}

\begin{figure}[t]
	\centering
	\subfloat[][Llumnix]{
		\begin{minipage}[t]{0.43\linewidth}
			\centering
			\includegraphics[width=\linewidth]{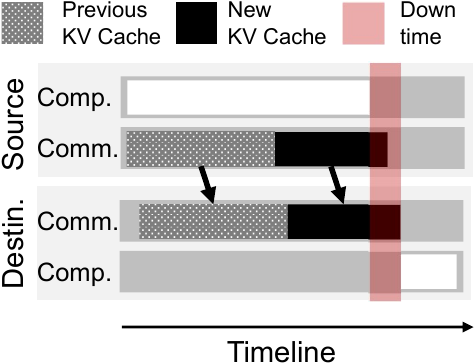}
		\end{minipage}
		\label{fig:llumnix}
	}
	\hfill
	\subfloat[][ServerlessLLM]{
		\begin{minipage}[t]{0.43\linewidth}
			\centering
			\includegraphics[width=\linewidth]{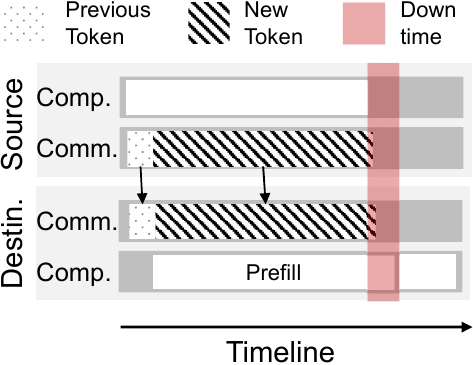}
		\end{minipage}%
		\label{fig:serverlessLLM}
	}
	\centering
	\caption{Two existing KV caches live migration approaches.}
    \label{fig:related_work}
\end{figure}

% Most existing LLM serving systems statically allocate a chunk of memory to a request based on the maximum possible sequence length of the request~\cite{orca}. 
An intuitive idea to improve memory utilisation is to route each incoming request to a GPU with enough memory to hold the request's KV cache. However, this is difficult to implement, as discussed below.

\begin{tcolorbox}
\begin{insight}
It is difficult to predict the maximum KV cache required for an LLM request
% and when the KV cache will be released 
due to the inherent unpredictability of the response length generated.
\end{insight}
\end{tcolorbox}

The preliminary experiment is conducted on LMSYS-Chat-1M~\cite{zheng2024lmsyschatm} and WildChat~\cite{zhao2024wildchat}, two large-scale datasets containing real chatbot conversations. 
\autoref{fig:input_output} shows the distribution of token numbers for responses in the three popular LLMs~\cite{zheng2023judging,koala_blogpost_2023,openai2024gpt4technicalreport}.
The distributions of token numbers vary considerably between different LLM models, with a wide range of possible values.
Moreover, as shown in \autoref{fig:user2bot_0}, the same prompt length can yield disparate response lengths. 
Consequently, it is challenging to ascertain the response length based on the length of the corresponding prompt.
Although a few existing works try to predict the response length~\cite{zheng2023response}, the prediction performance is poor (i.e., the accuracy is lower than 60\% even in a length range granularity of 100~\cite{hu2024inferenceinterferencedisaggregatellm}).

\textbf{Finding 2} highlights the challenge of scheduling algorithms without knowledge of KV cache size.

\begin{tcolorbox}
\begin{insight}
    Scheduling the placement of requests during their decoding phase can serve more requests simultaneously by fully utilizing multiple GPUs' memory for the KV cache. 
\end{insight}
\end{tcolorbox}

We evaluate the request serving ratio of an LLM serving system without request migration during the decoding phase and with migration enabled. The request serving ratio is the number of requests whose KV cache is retained in GPU memory. 
As shown in \autoref{fig:optimal_random}, request migration allows an LLM serving system to handle $23 \sim 60\%$ more requests than a system without request migration.

Several works propose live migration for KV caches to migrate the running LLM requests without long service interruption, such as Llumnix~\cite{llumnix} and ServerlessLLM~\cite{serverlessLLM}.
However, their scheduling is based on a heuristic design without a performance guarantee in theory. 
For example, Lluminx adopts a load balancing strategy between GPUs by swapping with the lowest load and highest load repeatedly~\cite{llumnix}.
% ServerlessLLM ... 
Moreover, these works only focus on how to achieve the liveness of the migration but overlook the migration overhead on computation and communication resources as follows.

\begin{tcolorbox}
\begin{insight}
Existing migrations for the KV cache are either compute-intensive or communication-intensive.
\end{insight}
\end{tcolorbox}

\autoref{fig:related_work} shows the workflow of the LLM request migration in Llumnix~\cite{llumnix} and ServerlessLLM~\cite{serverlessLLM}. 
The live migration of Llumnix employs the intrinsic append-only nature of the KV cache to facilitate the concurrent transfer of the KV cache copy of the legacy tokens and the decoding computation for the new tokens.
However, each migration needs a huge amount of KV cache transferred between GPUs, burdening the inter-GPU bandwidth.
ServerlessLLM exhibits a comparable approach, albeit with a transformation between tokens and the KV cache, whereby the tokens are transmitted instead of the KV cache. 
% We evaluate the per-token latency of the requests to be migrated in the source and destination instances.
% First, ...
Chunked prefill allows prefills to be batched together with decode requests.
However, the latency of requests in the decoding phase will slow down by up to 2.5x when they co-execute a migrating request with a prefill requirement~\cite{hu2024inferenceinterferencedisaggregatellm}.

These findings call for a new multi-GPU KV cache management in the LLM serving system, fully utilizing GPU memory and limiting the request migration number.

\section{System Overview}

\begin{figure}[t]
    \centering
    \includegraphics[width=0.75\linewidth]{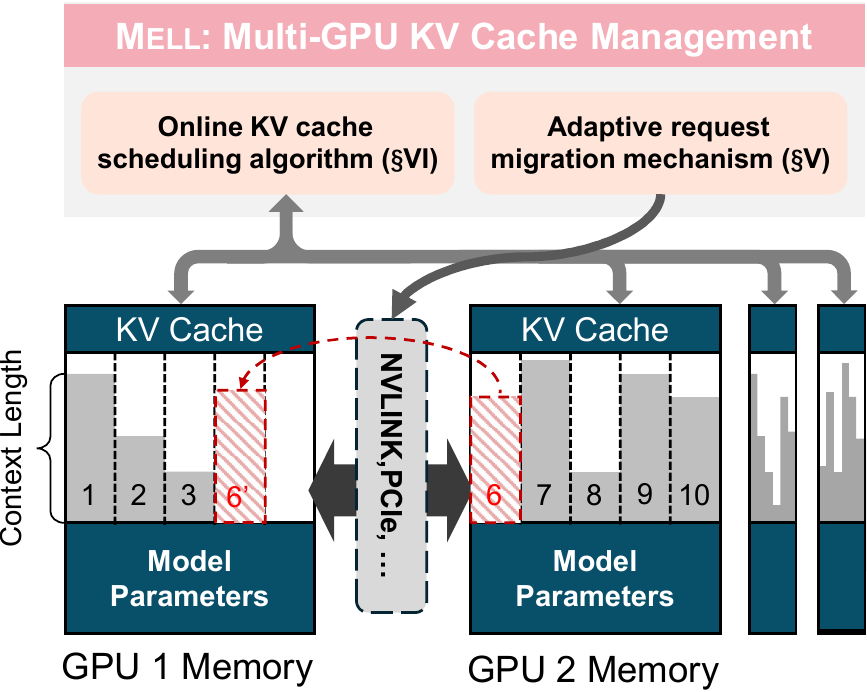}
    \caption{System architecture of \textsc{Mell}.}
    \label{fig:architecture}
\end{figure}

% Similar to the existing LLM serving system~\cite{Taming,patel2024splitwise,DistServe,hu2024inferenceinterferencedisaggregatellm}, \textsc{Mell} is based on a prefill-decoding disaggregated architecture and focuses on the scheduling for the decoding phase.
This section gives an overview of \textsc{Mell}'s design. First, we clarify three primary design goals of \textsc{Mell} as follows.
\begin{itemize}
    \item \textbf{GPU cost efficiency}: Given the high cost and scarcity of GPUs in the current market, the objective of \textsc{Mell} is to reduce the number of GPUs required to process LLM requests in a cost-effective manner.
    \item \textbf{Online strategy planning}: New requests may come at any time, and running requests may be completed. Thus, \textsc{Mell} needs to update the scheduling of the requests' KV caches with incomplete knowledge of the future.
    \item \textbf{Restricted-performance impact}: As a consequence of the request migration, the communication or computation resources are occupied and the normal running requests are influenced. Consequently, \textsc{Mell} mitigates the impact on the performance of the other requests.
\end{itemize}

A system overview of \textsc{Mell} is shown in \autoref{fig:architecture}, where \textsc{Mell} is integrated into the existing multi-GPU LLM serving framework.
\textsc{Mell} is not only a scheduling algorithm, but also a set of modules that optimize KV cache management across GPUs.
It employs two key components: an adaptive request migration mechanism (refer to \autoref{sec:migration}) and an online KV cache scheduler (refer to \autoref{sec:scheduling}). 
The former aims to balance the computational and communication resource overhead to minimise the negative impact caused by request migration.
The latter aims to minimise the number of GPUs by migrating the LLM cache of requests across GPUs with different workloads.

The system's life cycle is composed of multiple epochs.
% The proposed multi-GPU KV cache management monitors and schedules the requests in the decoding instances.
At the beginning of each epoch, the instances send their state information (including request number and memory cost of each request's KV cache) to the cluster monitor of \textsc{Mell}.
According to the state information, \textsc{Mell} generates the updated KV cache scheduling strategy via the online scheduling algorithm and sends the strategy to the instances.
Then, according to the given strategy and the communication and computation capacity of the system, the instances can migrate the requests by following the adaptive request migration mechanism. 
For example, in \autoref{fig:architecture}, GPU 2 migrates request 6 to GPU 1 to reduce the load on GPU 2.
% In addition, \textsc{Mell} is compatible with the existing popular prefill-decode disaggregated architecture~\cite{DistServe,patel2024splitwise,hu2024inferenceinterferencedisaggregatellm}, separating the GPUs responsible for the prefill and decode phases. 

\section{Adaptive Request Migration}
\label{sec:migration}

\begin{figure}[t]
    \centering
    \includegraphics[width=0.75\linewidth]{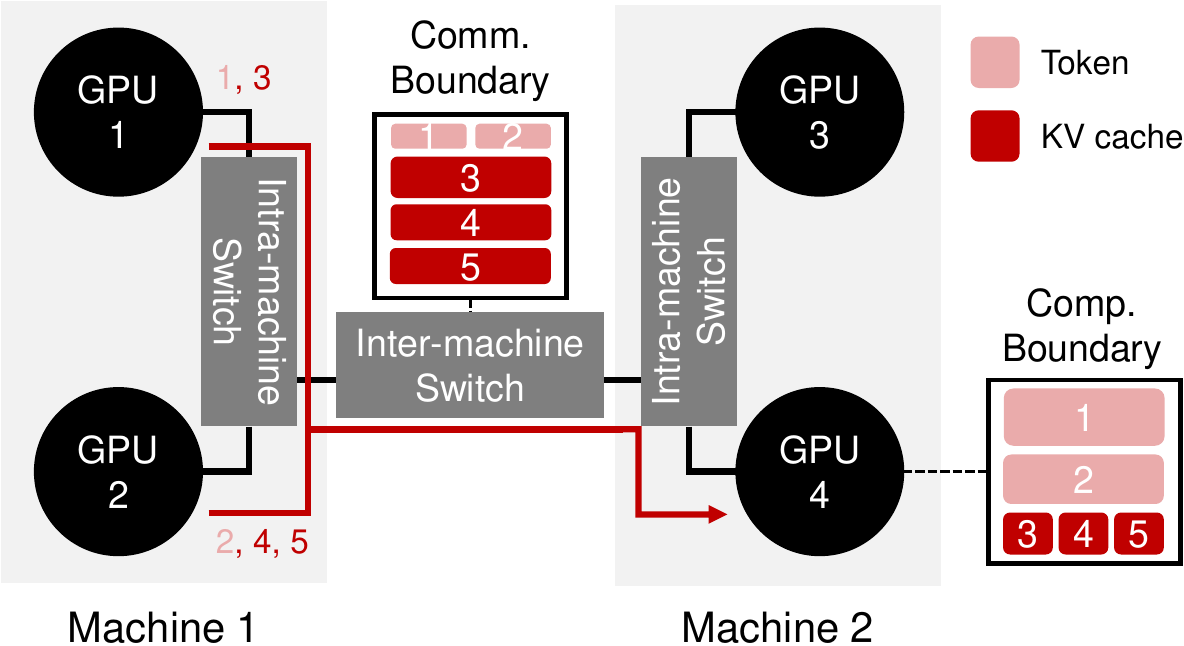}
    \caption{An example of adaptive request migration.}
    \label{fig:hybrid_migration}
\end{figure}

As discussed in \textbf{Finding 4} in \autoref{sec:motivation}, existing LLM request migrations are either compute-intensive (i.e., token transfer) or communication-intensive (i.e., KV cache transfer). This fixed migration strategy can cause resource congestion when multiple requests must be migrated simultaneously. Therefore, mitigating the negative impact of request migration by balancing its communication and computation costs becomes a critical challenge that needs to be addressed by \textsc{Mell}.

To address this challenge, we propose an adaptive request migration mechanism in \textsc{Mell}. Its main idea is to first identify the idle computational and communication resources in the system that can be used by the request migration without affecting the normal operation of the system. Next, it migrates each request by transferring either tokens or KV cache, and orchestrates all requests to be migrated to make the consumption within the boundary. As shown in \autoref{fig:hybrid_migration}, its workflow consists of the following steps.

\textbf{Boundary Profiling.} We first define the \emph{communication boundary} of a GPU communication link as a certain amount of data that can be transferred over the link. 
This boundary is set to ensure that the data can be transferred in a limited amount of time. 
This is because the GPU memory occupied by the migrating requests cannot be released until the transfer is complete.
We also define the \emph{computation boundary} of an instance with a batch size as a certain number of tokens to be prefilled due to migration.
This is because a prefill computation with long tokens interferes with the co-located computation, while one with short tokens does not since it can use idle computation resource~\cite{hu2024inferenceinterferencedisaggregatellm}.
At the beginning of the system, we identify the communication boundary for every link and the computation boundary for every instance via offline profiling.
The boundary information will be shared with all instances in the system.

\textbf{Hybrid Migration.} Given a new KV cache scheduling strategy, each instance migrates requests according to the boundary. 
Each instance needs to divide its requests to be migrated into two classes. 
The first class includes the requests transferred as KV cache~\cite{llumnix}. 
The second class includes the requests transferred as tokens and then prefilled in the destination instances~\cite{serverlessLLM}.
The division is formulated as a two-bin-packing problem that can be solved using a greedy algorithm (e.g., first-fit or best-fit).
% The problem size is often small because of the limited number of requests to be migrated according to our experiment in \autoref{sec:experiment}.

\textbf{Global Consensus.} Multiple instances can use the same link and migrate requests to the same instance.
For example, in \autoref{fig:hybrid_migration}, GPU 1 and GPU 2 use the inter-machine switch to migrate requests to GPU 4. 
They may exceed the boundaries if the above division is done without cooperation.
To avoid this, each instance runs the algorithm considering all requests to be migrated in the system, not just its own requests.
The global division is still a two-bin packing problem, where each request can choose either token transfer or KV cache transfer.

\section{Online KV Cache Scheduling}
\label{sec:scheduling}

This section presents the online KV cache scheduling algorithm in \textsc{Mell}. 
The system model of a multi-GPU LLM serving system is first presented, followed by details of our algorithm and theoretical performance analysis.

% \begin{table}[t]
%     \caption{Key Notations}
%     \centering
%     \normalsize 
%     \begin{tabular}{p{0.2\linewidth} p{0.7\linewidth}}
%          \toprule
%          Notation   & \multicolumn{1}{c}{Definition}\\
%         \midrule
%         $a_i$      & the arrive time of request $i$\\
%         $e_i$      & the finish time of request $i$\\
%         $I$        & the set of all the request $i$\\
%         $x_{i,j}(t)$   & Whether request $i$ is processed on GPU $j$\\
%         $X(t)$     & the set of all $x_{i,j}$ at time $t$\\
%         $y_j(t)$   & whether the GPU $j$ is working at time $t$\\
%         $Y(t)$     & the set of all $y_{j}$ at time $t$\\
%         $\omega_i(t)$  & the cost of migrate request $i$ at time $t$\\
%         $C_j$      & the total memory of GPU $j$\\
%         $S_i(t)$   & the memory requirement of request $i$ at time $t$\\
%         $S_{\text{max}}$    & The upper limit of memory requirements for all requests \\
%         $S_I(t)$   & the total memory requirement list $I$ at time $t$,$S_I(t) = \sum_{i\in I}S_i(t)$\\
%         $B_A(I)$     & the allocation of algorithm $A$ applied to request list $I$\\
%         $B_\myOPT(I)$ & the allocation of the optimal offline algorithm applied to request list $I$ \\
%         \bottomrule
%     \end{tabular}
%     \label{tab:my_label}
% \end{table}

\subsection{System Model}
\label{system-model}

\paragraph{GPU Cluster}
We consider a set of homogeneous GPUs, denoted by $J$ and the memory capacity for the KV cache in each GPU is denoted by $C$. 
% The throughput (i.e., token number per time slot) under the batch size of $b$ in GPU $j$ is $H_j^b$, which can be set via offline profiling.
% The communication bandwidth between GPU $j_1$ and $j_2$ is denoted by $L_{(j_1,j_2)}$.
% In this section, we consider the system to serve one LLM, which means all the GPUs store the same copy of LLM.
% A more general case will be discussed in \autoref{sec:discussion}.

\paragraph{Requests} Users can send a set of LLM requests denoted by $I$ to the system within time slots $T$, and the arrival time of request $i \in I$ is denoted by $a_i \in T$. 
These requests have different token numbers to be processed (i.e., prefilled and decoded) due to the diverse tasks of users.
The memory usage of the KV cache of request $i \in I$ at time $t \in T$ is $S_i^t$, and it linearly increases with the number of tokens processed before the request is completed, i.e., $S_i^t \ge S_i^{t-1}$.
% ..., i.e., $S^{min} \leq S_i^t \leq S^{max}$.
% Each request $i$ has a service level objective (SLO) requirement denoted by $l_i$.

\paragraph{Serving Strategy} 
When request $i$ arrives, it is assigned to a GPU for processing. 
We denote a serving strategy for $T$ as $\mathbf{x} = \{\mathbf{x}^t\}_{t \in T}$ in which $\mathbf{x}^t = \{x_{i,j}^t\}_{i \in I, j \in J}$ is defined as
\begin{equation}
	x_{i,j}^t=
	\begin{cases}
	1, &\text{if request $i$ runs on GPU $j$ at time $t$}\\
	0, &\text{otherwise}
	\end{cases}
\end{equation}

Given a strategy $\mathbf{x}$ for $T$, the system should ensure that any GPU at any time slot must have enough memory space to process the allocated request, i.e.,
\begin{equation}
    \sum_{i \in I} x_{i,j}^t S_i^t \leq C, \quad \forall j \in J, \forall t \in T.
\label{eq:gpu_memory}
\end{equation}

\paragraph{System Cost} $y_j^t$ indicates whether any requests are running on GPU $j$ at time $t$ and $y_j^t = \min \{1, \sum_{i \in I} x_{i,j}^t\}$.
Therefore, the number of GPUs needed to serve the set of inference requests $I$ by strategy $\mathbf{x}$ within time slots $T$ is 
\begin{equation}
    B(\mathbf{x})=\max_{t \in T} \sum_{j \in J} y_j^t.
\end{equation}

% To improve memory efficiency, the serving strategy at every time $t$ should be updated based on the increasing size of the KV cache for ongoing requests and the KV cache releasing of completed requests.

% $R(t)$ denotes the sum of the requests moved at time $t$.
% The migration factor of a algorithm is defined as $\beta = \max_t {R(t)\over \sum_{i\in I}S_i(t)}$. As migration will impact the user experience, our algorithm must minimise the migration factor while maintaining scheduling efficiency.

\subsection{Problem Formulation}

For a serving strategy $\mathbf{x}$ to minimize the number of GPUs needed for a set of requests $I$, we formulate a \emph{KV cache scheduling problem} in a multi-node multi-request system
\begin{subequations}
    \begin{align}
         \min B(\mathbf{x}) = &\min \max_{t \in T} \sum_{j \in J} y_j^t \tag{6}
        \label{eq:p1_obj}\\
         {\rm s.t.} \sum_{i \in I}& x_{i,j}^t S_i^t \leq C, \quad \forall j \in J, \forall t \in T.\notag
    \end{align}
    \label{problem:p1}
\end{subequations}

% Constraint \eqref{eq:gpu_work} ensures that the GPU must start when it is allocated with a request. 
% Constraint \eqref{eq:kv_growth} and \eqref{eq:max_lenght} indicates the memory requirement of request $i$ is monotonically increasing but with an upper limit.
% Constraint \eqref{eq:request_assign} requires that each request must be immediately allocated to a GPU for processing. 
% Constraint \eqref{eq:magration_cost} indicates that the cost of migration is positively related to the space occupied by the request at the previous moment.

Solving the problem poses the following challenges.  
First, the arrival and completion of requests are unpredictable, limiting decision-making to the available information. This makes it difficult to approach the global optimum, as the system cannot anticipate all future needs. 
Second, the memory required for each request $S_i^t$ is dynamic, growing over time until the request is completed. This requires constant adjustment of resource allocations, making it difficult to plan and optimise resource usage.
% Third, migration costs, proportional to the memory space occupied by a request at the time of migration, introduce additional complexity. 
Third, the decision at any moment resembles a bin packing problem, which is NP-complete~\cite{NPC}. Our problem, however, introduces greater complexity as historical choices influence each decision, complicating the resolution process significantly beyond the NP-complete framework.

\begin{figure*}[t]
    \centering
    \noindent\framebox{
    \begin{minipage}{0.99\linewidth}
    \begin{multicols}{2}
    % [
    % \begin{center}
    % \normalsize
    % \textbf{}
    % \end{center}
    % \vspace{-0.15in}
    % ]
    \normalsize
    $\bigstar J.Allocate(i)$: \\
    \small
    % $~~~~~$\textsl{Input}: Request $i$\\
    % $~~~~~$\textsl{Output}: Updated allocation\\
    \vspace{-0.1in}
    
    \begin{algorithmic}[1]
    \STATE \textit{\textbf{Allocate T-request:}} 
        For all L-GPU $j\in J$ with enough memory to fit $i$, allocate $i$ to GPU $j$ with the highest priority. 
        Otherwise, allocate $i$ to the most recently activated T-GPU.
    
    \STATE \textbf{\textit{Allocate S/M-request:}}
        % If there exists L-GPU $j\in J$ with enough \textit{flexible memory} to fit $i$, allocate $i$ to $j$.
        For all L-GPU $j\in J$ with $S_{i_L}^t+S_i^t < C$, $i_L$ is the L-request in $j$. 
        Allocate $i$ to $j$ with the highest priority.
        Depart and re-allocate any T-request that exists in $j$.
        Otherwise, allocate $i$ to the most recently activated S/M-GPU.
    
    \STATE \textbf{\textit{Allocate L-request:}}
        Activate a new GPU $j$, $J = J \cup \{j\}$. 
        Allocate $i$ to $j$.
        Move an S/M-request from an S/M-GPU $j'$ to $j$ if possible. Then fulfil $j'$ with S/M-request from the most recently activated S/M-GPU. 
    
    \end{algorithmic}
    \BlankLine
    \normalsize
    $\bigstar J.Depart(i)$: \\
    \small
    % $~~~~~$\textsl{Input}: Request $i$\\
    % $~~~~~$\textsl{Output}: Updated allocation\\
    $~~~~~$ Assume request $i$ is processed by GPU $j\in J$ currently.
    % \vspace{-0.1in}
    
    \begin{algorithmic}[1]
    % Assume request $i$ is in GPU $j$ currently.\\
    \STATE \textit{\textbf{GPU $j$ is the most recent activated GPU:}} Remove $i$ from $j$. 
    \STATE \textit{\textbf{Depart T-request:}}
        If $j$ is T-GPU, move a T-request from the most recently activated T/M-GPU to $j$. 
        Otherwise, move a T-request from the most recently activated T-GPU to fit in $j$.
    \STATE \textit{\textbf{Depart S/M-request:}}
        If $j$ is an S/M-GPU, move an S/M-request from the most recently activated S/M-GPU to $j$. Re-allocate any T-request that may exist in $j$.
        If $j$ is an L-GPU, move an S/M-request to $j$ from an S/M-GPU $j'$ with the highest priority for GPU $j$. Then, fulfil $j'$ with S/M-request from the most recently activated S/M-GPU. 
    \STATE \textit{\textbf{Depart L-request:}}
        Depart and re-allocate all other requests in $j$.
    \end{algorithmic}
    
    \BlankLine
    \normalsize
    $\bigstar J.Update(i)$: \\
    \small
    $~~~~~$ Assume request $i$ is processed by GPU $j\in J$ currently.
    % $~~~~~$\textsl{Input}: Request $i$\\
    % $~~~~~$\textsl{Output}: Updated allocation\\
    % \vspace{-0.15in}
    
    \begin{algorithmic}[1]
    \STATE \textbf{\textit{T/S-request $\rightarrow$ S/M-request:}}
    Depart $i$ and re-allocate $i$.
    
    \STATE \textbf{\textit{M-request $\rightarrow$ L-request:}}
    If $j$ is a L-GPU, depart $i$ and re-allocate $i$.
    If $j$ is an M-GPU and overload occurs after the update, depart and re-allocate all other requests in $j$.
    
    \STATE \textbf{\textit{L-request $\rightarrow$ L-request:}}
    If overload occurs after growth, depart and re-allocate all other requests in $j$.
    \end{algorithmic}
    
    \end{multicols}
    \vspace{0.0001in}
    \end{minipage}}
    \vspace{-.05in}
    \caption{Three request operations for request $i$ on GPU cluster $J$ for multi-GPU KV cache scheduling.}
    \label{fig:operations}
    \vspace{-.2in}
\end{figure*}

\subsection{Online Algorithm Design}

According to the characteristics of LLM serving, we design an online algorithm for the KV cache scheduling problem motivated by~\cite{adaptive_online,efficient_online,10.1145/2528521.1508269}. 
It allocates incoming LLM requests based on GPU memory and request requirements, then updates allocations by migrating requests between GPUs to adapt to workload fluctuations. 
Unlike existing scheduling algorithms focusing primarily on immediate state changes, our algorithm takes a long-term view of scheduling to minimize space fragmentation and avoid creating unused fragmented spaces. 
It can achieve near-optimal allocation globally rather than just providing short-term solutions.

\textbf{Priority-aware GPU Categories.}
We first classify the requests into four categories based on their KV cache sizes: $L$ (Large), $M$ (Medium), $S$ (Small), and $T$ (Tiny). 
Request $i$ is an $L$-request if $S_i^t$ is within $(C/2, C]$; $M$-request if $S_i^t$ is within $(C/3, C/2]$ ; $S$-request if $S_i^t$ is between $(C/4, C/3]$; 
$T$-request if $S_i^t$ ranges from $(C/8, C/4]$. 
For requests smaller than $C/8$, we group them into multi-items with sizes in the range $(C/8, C/4]$.
GPUs are categorised based on the largest type of request they process: GPU $j$ is labelled as an L, M, S, or T-GPU if its largest request in category $j$ is an L, M, S, or T-request.
We also define a priority relationship from GPUs $j$ to $j'$, which is determined by factors including the workload (e.g., request number and idle GPU memory) of GPU $j'$ and the distance between GPUs $j$ to $j'$.
For example, a GPU $j'$ that handles fewer requests, has more GPU memory, or is on the same machine as GPU $j$ will be assigned a higher priority for GPU $j$. 
The weights of different factors are set by the LLM service provider.
Besides, we define a priority of GPU $j$ that is only determined by the workload of GPU $j$ for the allocation of incoming requests.

\begin{algorithm}[t]
\SetAlgoLined %显示end
\caption{Overall Workflow}%算法名字
\label{alg:online}
\KwIn{LLM request set $I$, GPU cluster $J$}%输入参数
\For{$t \in T$}{
    \For{$i \in \{ i \mid S^t_i > 0 \lor S^{t-1}_i > 0 \}$}{
    \If{\textnormal{request $i$ arrives at $t$}}{
         $J.Allocate(i)$
    }
    \ElseIf{\textnormal{request $i$ is completed at $t$}}{
         $J.Depart(i)$
    }
    \ElseIf{\textnormal{request $i$'s type changes at $t$}}{
         $J.Update(i)$
    }
    }
    \For{GPU $j\in J$ processing no request}{
        terminate GPU $j$, $J = J-\{j\}$
    }
}
\end{algorithm}

\textbf{Request Allocation/Depart/Update.}
The arrival of new requests, the departure of completed requests, and the growth of the KV cache of running requests can all lead to GPU underloaded or overloaded. 
Therefore, \autoref{alg:online} takes the LLM request set $I$ and GPU cluster $J$ as input to update the allocation based on three operations: (1) allocating new requests to the most appropriate GPU, (2) dropping completed requests to free resources, and (3) updating the position of running requests along with their processing. 
Details of each operation are given in \autoref{fig:operations}, which guarantees that our algorithm maintains a near-optimal number of GPUs, as proven in \autoref{analysis}.
% Moreover, the design of priority
% Details of each operation are shown in the next paragraph. This approach ensures the algorithm can respond to workload fluctuations, enhancing computational efficiency and reducing operational costs.

% \paragraph{Operations}
% \label{par:operations}
% As mentioned in \autoref{system-model}, the request's memory consumption is unpredictable and increases over time. 

\textbf{Request Operation Batching.}
The request operations in \autoref{fig:operations} are designed to efficiently manage individual requests' allocation, departure, and update. 
However, overlapping operations can occur when multiple requests need to perform these operations simultaneously, resulting in redundant request migration. 
% For example, instead of migrating request $i$ from GPU 1 to GPU 2 and then from GPU 2 to GPU 3, migrating directly from GPU 1 to 3 is more efficient.
To address this problem, we introduce \textit{request operation batching}. 
This approach combines and optimizes operations as a unified group rather than discretely, minimizing unnecessary resource allocation and migration. 
Implementing operation batching is critical to ensure efficient request migration within our framework, especially in high-demand scenarios.
Given an operation set $O$, the steps for operation batching are: 
(1) Execute all $Depart()$ in $O$. Instead of executing the possible migration caused by $Depart()$, add them into an operation buffer $B$. 
(2) Execute all $Update()$ in $O$. Instead of executing the possible migration caused by $Update()$, add them into buffer $B$. Check $B$ and remove unnecessary movement.
(3) Execute all $Allocate()$ in $O$. Check $B$ and remove unnecessary movement.
(4) Execute all operations in the buffer.

\section{Analysis}
\label{analysis}

\begin{theorem}
\label{theorem1}
    Given any set of LLM requests $I$, the allocation obtained by our algorithm satisfies all the following properties (with a constant number of exceptions):
    \begin{enumerate}
        \item M-GPU process two M-requests, possibly one T-request.
        \item S-GPU process three S-requests.
        \item T-GPU memory usage is at least 75\%.
        \item L-GPU $j$ process no S/M-request only if no M/S-request in the M/S-GPU can fit in $j$.
        \item T-GPU exist only if all GPU memory utilisation of L/M-GPU is at least 75\%.
    \end{enumerate}
\end{theorem}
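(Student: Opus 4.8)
The plan is to establish properties (1)--(5) as \emph{loop invariants} of \autoref{alg:online} and prove them by induction on the sequence of $Allocate$, $Depart$, and $Update$ operations. The ``constant number of exceptions'' should be pinned down explicitly: for each of the four categories I would designate the \emph{most recently activated GPU} of that type as a distinguished ``open'' bin, and claim only that every \emph{non-open} GPU satisfies the corresponding tight-packing invariant. Since there are four categories, this leaves at most a constant number of exceptional GPUs at any time, which is exactly the slack the statement allows. The base case (empty allocation) is vacuous, so the whole argument lives in the inductive step.

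First I would record the elementary arithmetic that drives every case, reading it off the category intervals. An $L$-request occupies more than $C/2$, so at most one fits per GPU; an $M$-request exceeds $C/3$, so at most two fit, and two of them leave strictly less than $C/3$ free, which accommodates a $T$-request (size $\le C/4$) when the two $M$-requests are not too large (total at most $3C/4$) --- this is the content of the ``possibly one $T$-request'' clause in (1). An $S$-request exceeds $C/4$, so exactly three fit and occupy more than $3C/4$, giving (2); and since a $T$-request has size $\le C/4$, any GPU with free space at least $C/4$ (equivalently utilisation below $75\%$) can still admit one, which is the pigeonhole behind (3) and (5). I would state these as a short preliminary lemma and treat them as known throughout.

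The inductive step then reduces to checking each branch of the three operations in \autoref{fig:operations} against the invariants. The allocation rules are engineered precisely so that the invariants are restored after each step: a $T$-request is routed first into an existing $L$-GPU (or $M$-GPU) that can hold it, so a fresh $T$-GPU is opened only when every $L/M$-GPU already has less than $C/4$ free --- directly re-establishing (5); an $S/M$-request is placed so as to leave $L$-GPUs as full as possible, re-establishing (4); and departures trigger the ``fulfil from the most recently activated same-type GPU'' migrations, which move load out of the open bin into the GPU that just lost a request, so that fullness invariants (1)--(3) are preserved on all non-open GPUs while only the single open GPU of each type absorbs any deficit. In each branch I would verify that (a) no memory constraint \eqref{eq:gpu_memory} is violated, and (b) the number of non-open GPUs failing an invariant does not increase.

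The hard part will be the $Update$ operation, where a growing KV cache pushes a request across a category boundary (e.g.\ $M \to L$). This can cause an \emph{overload} on a GPU that was previously feasible, forcing the ``depart and re-allocate all other requests'' clause and thereby a cascade of migrations; I must argue that each such cascade terminates and re-establishes all five invariants without spawning more than the allotted constant number of new exceptional GPUs. The subtle accounting is that a single growth event can simultaneously threaten (4) and (5) --- an $L$-GPU's former slack shrinks, and an $S/M$-request evicted from it must find a new home. I would therefore handle $Update$ by first showing it is equivalent to a $Depart$ of the old-category item followed by an $Allocate$ of the new-category item, and then invoking the already-proved allocation and departure cases, reducing the genuinely new work to bounding the length of the eviction cascade and confirming that the \emph{request operation batching} step does not reintroduce the redundant migrations it is meant to cancel.
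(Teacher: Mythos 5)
Your proposal follows essentially the same route as the paper's own proof: an induction over the three operations of \autoref{fig:operations}, checking that each branch of $Allocate$/$Depart$ preserves the properties (with the most recently activated GPU of each type absorbing the slack that accounts for the ``constant number of exceptions''), and handling $Update$ by reducing it to a $Depart$ followed by an $Allocate$. If anything, your plan is more explicit than the paper's one-paragraph case analysis --- in particular the packing arithmetic and the termination of eviction cascades under $Update$, which the paper asserts without detail --- but the decomposition and key ideas are the same.
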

\begin{proof}
In the following, we discuss every operation separately. 
\textit{1) Allocate/Depart T-request:} The preference for the L-GPU in $Allocate()$ and the attempt to replenish the L-GPU with M/S-requests in $Depart()$ ensure that property 4 is not violated.
The remaining properties remain unaffected.
\textit{2) Allocate/Depart M/S-request:} The preference for L-GPU in $Allocate()$ and the attempt to refill the L-GPU with M/S-request in $Depart()$ ensure that property 4 is not violated. 
The remaining properties remain unaffected.
\textit{3) Allocate/Depart L-request:} Allocation/departure of L-request triggers $Allocate()/Depart()$ of other types of requests, which are shown to satisfy the properties. 
\textit{4) Update Operation:} $Update()$ consist of $Allocate()$ and $Depart()$; hence, there is no violation of properties.
\end{proof}

\begin{theorem}
\label{competitive-ratio}
    For any request set $I$, given the scheduling algorithm $A$ that maintains the allocation fulfils all the properties in \autoref{theorem1}, the competitive ratio of $A$ is at most $4/3$.
\end{theorem}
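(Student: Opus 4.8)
The plan is to prove a bound of the form $B(\mathbf{x}_A)\le \tfrac{4}{3}\,\myOPT+O(1)$, where the additive constant absorbs the ``constant number of exceptions'' allowed in \autoref{theorem1}. First I would reduce the dynamic problem to a single static instance. Since $B(\mathbf{x}_A)=\max_{t\in T}\sum_{j\in J}y_j^t$, I fix a peak slot $t^\star$ attaining this maximum for $A$, and observe that any offline strategy must also hold every request alive at $t^\star$, so $\myOPT\ge \mathrm{OPT}_{t^\star}$, the optimum of the ordinary bin-packing instance with item sizes $\{S_i^{t^\star}\}$ and bin capacity $C$. Thus it suffices to compare the GPUs $A$ uses at $t^\star$ against $\mathrm{OPT}_{t^\star}$. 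Throughout I would rely on two lower bounds: the volume bound $\myOPT\ge \lceil V/C\rceil$ with $V=\sum_i S_i^{t^\star}$, and the cardinality bound $\myOPT\ge N_L$, since every $L$-request exceeds $C/2$ and hence occupies a bin by itself in any packing.

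The heart of the argument is a dichotomy driven by property~5. Let $n_L,n_M,n_S,n_T$ count the GPUs of each category at $t^\star$. If at least one $T$-GPU is present, property~5 forces every $L$- and $M$-GPU to be at least $75\%$ full; property~2 makes each $S$-GPU hold three $S$-requests totalling more than $3C/4$, and property~3 gives each $T$-GPU at least $75\%$ occupancy. Hence \emph{every} GPU is at least $3/4$ full, so $V\ge \tfrac34 C\,(n_L+n_M+n_S+n_T)$ and the volume bound yields $B(\mathbf{x}_A)\le \tfrac43\,\myOPT$. I would note that this case is tight: $N$ requests of size exactly $C/4$ make $A$ pack three per $T$-GPU (at $75\%$) while OPT packs four (at $100\%$), giving ratio $4/3$ and explaining where the constant originates.

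In the complementary case no $T$-GPU exists, so $B(\mathbf{x}_A)=n_L+n_M+n_S$ and the pure volume bound is too weak, since an $M$-GPU may sit at only $2C/3$. Here I would switch to a charging argument coupling cardinality and property~4. The $L$-GPUs are handled by the exact identity $n_L=N_L\le\myOPT$: charge each $L$-GPU to the unique OPT bin holding its $L$-request, so every OPT bin receives at most one unit from this step. The decisive use of property~4 is that an $L$-GPU carries no $M/S$ filler only when no outstanding $M/S$-request fits its residual space $C-S_{i_L}^{t^\star}$; but the OPT bin containing that same $L$-request has residual at most $C-S_{i_L}^{t^\star}$ as well, so OPT cannot co-locate any $M/S$-request with it either. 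Consequently, whenever our $L$-GPUs waste their residual, the $M/S$-requests are packed separately in \emph{both} solutions, and our near-full $M$-GPUs (two $M$'s, by property~1) and $S$-GPUs (three $S$'s, by property~2) are charged against OPT's correspondingly many $M/S$-bins so that the total charge per OPT bin stays within $4/3$.

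The main obstacle is precisely this $L$-GPU accounting. A uniform weighting cannot work: to charge a lone $L$-GPU one needs weight at least $1$ on an $L$-request, yet an $L$- and an $M$-request can share a bin (their sizes may sum below $C$), so an OPT bin would then carry weight $1+\tfrac12>\tfrac43$, breaking any per-bin cap; and a crude count of $L$- and $M$-requests only yields a ratio of $3/2$. Beating this down to $4/3$ is exactly what property~4 buys, by ensuring that the residual space of an $L$-GPU is wasted only when OPT is forced to waste the same space, so the two solutions pack the remaining $M/S$ mass near-identically. Finally I would collect the bounded number of exceptional GPUs permitted by \autoref{theorem1} into the additive $O(1)$ term, completing $B(\mathbf{x}_A)\le \tfrac43\,\myOPT+O(1)$ and hence the competitive ratio of $4/3$.
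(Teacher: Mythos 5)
Your overall architecture matches the paper's: the dichotomy on whether a T-GPU exists, the volume bound giving $4/3$ when one does (every GPU at least $3/4$ full by properties 2, 3 and 5), and a property-4-driven argument when none does. The reduction to the static instance at the peak slot $t^\star$ and the tightness example with items of size exactly $C/4$ are welcome additions that the paper leaves implicit. Where you genuinely diverge is the no-T-GPU case. The paper finishes with a weight function: an L-request whose L-GPU holds no M/S-request (\emph{single}) gets weight $1$, a \emph{combined} L-request gets $5/6$, M-requests $1/2$, S-requests $1/3$, T-requests $0$; \autoref{lemma1} shows every non-exceptional GPU of $A$ carries weight at least $1$, and \autoref{lemma2} shows every OPT bin carries weight at most $4/3$ (the extremal case being a combined L plus an M, $5/6+1/2=4/3$), which chains to $|A(I)|\le W(I)+c\le \tfrac{4}{3}\,\myOPT+c$. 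You instead propose a direct charging scheme from $A$'s GPUs to OPT's bins.

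The gap is in that second case. Your claim that ``a uniform weighting cannot work'' because an L- and an M-request can share an OPT bin is precisely the objection the paper's \emph{non-uniform} L-weights are designed to answer: an L-request that the algorithm managed to pair is discounted to $5/6$ so the LM bin lands exactly on $4/3$, and property 4 is what licenses giving the remaining single L-requests the full weight $1$, since it certifies that no available M/S-request fits beside them, so OPT cannot form the offending $1+\tfrac{1}{2}$ bin either. Your charging argument gestures at the same fact but never carries out the per-OPT-bin accounting: the sentence asserting that the M- and S-GPUs are ``charged against OPT's correspondingly many M/S-bins so that the total charge per OPT bin stays within $4/3$'' is the entire content of the theorem in this case and is asserted rather than proven; one must still verify every feasible OPT bin composition (LM, LS, MM, MSS, SSS, and the L-with-T variants), which is exactly the table in the paper's \autoref{lemma2}, or exhibit an explicit injective charging rule with a per-bin cap. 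A further wrinkle affecting both your sketch and the paper's: property 4 only speaks of M/S-requests currently residing in M/S-GPUs, so the assertion that OPT cannot co-locate an M/S-request with a single L-request silently excludes M/S-requests that $A$ placed inside other L-GPUs. As it stands, Case 2 of your proposal is a credible plan, not a completed proof.
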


Intermediate value \textit{weight} will be introduced for the following proofs. 
Each request $x$ will have a corresponding weight $w(x)$, the total weight $W(I)$ of LLM request set $I$ is the sum of the weights of all the requests, i.e. $W(I) = \sum_{x\in I} w(x)$. 
In addition, we divide the L-requests into two types: \textit{Single} type if there is no M/S-request in this L-GPU, otherwise \textit{Combined}.  
The number of \textit{single} request is $\mathcal{S}$ and the number of \textit{combined} request is $\mathcal{C}$.
The weight of a single L-request is $w(x) = 1$; combined L-request is $w(x) = 5/6$; M-request is $w(x) = 1/2$; S-request is $w(x) = 1/3$; T-request is $w(x) = 0$, i.e. the T-requests lead to no difference in weight.

\begin{lemma}
\label{lemma1}
    Given a scheduling algorithm $A$ and request set $I$, the allocation is denoted as $A(I)$. If $A(I)$ fulfils all the properties in \autoref{theorem1}, we have $|A(I)| \le W(I)+c$, where $|A(I)|$ is the number of GPUs in $A(I)$ and $c$ is a constant.
    % Given $$ set $I$ without T-GPU, if $J$ fulfils all the properties in \autoref{theorem1},$|A| \le W(I)+c$. Where $|A(I)|$ is the number of GPUs in $P(I)$, $c$ is a constant.
\end{lemma}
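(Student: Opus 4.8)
The plan is to prove the bound by a \emph{weighting (charging) argument}: I will show that, apart from a constant number of exceptional GPUs, every GPU in the allocation $A(I)$ accumulates requests of total weight at least $1$. Summing the per-GPU weights then recovers $\sum_{x \in I} w(x) = W(I)$ from below and directly yields $|A(I)| \le W(I) + c$, with $c$ collecting the exceptional GPUs.

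First I would partition the GPUs of $A(I)$ by category (L, M, S, T) and lower-bound the weight each accumulates using the structural guarantees of \autoref{theorem1}. An L-GPU holds exactly one L-request, since two L-requests each exceed $C/2$ and cannot coexist; if it is \emph{single} its weight is exactly $1$, and if it is \emph{combined} it must (by definition) also contain an M- or S-request, giving weight at least $5/6 + 1/3 = 7/6 > 1$. By Property 1 an M-GPU holds two M-requests of weight $1/2$, so its weight is $\ge 1$; by Property 2 an S-GPU holds three S-requests of weight $1/3$, so its weight is exactly $1$. In every case the possible extra T-requests contribute weight $0$ and are harmless. The ``constant number of exceptions'' permitted by \autoref{theorem1} (e.g.\ an M-GPU momentarily holding a single M-request, or an S-GPU with fewer than three S-requests) removes only an additive constant $c_0$ from this count, which I fold into $c$.

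Summing the bounds over all non-T-GPUs yields $W(I) \ge (|A(I)| - N_T) - c_0$, where $N_T$ is the number of T-GPUs. The principal difficulty is that T-requests have weight $0$, so T-GPUs contribute nothing to $W(I)$ and must be controlled separately; a naive weight sum cannot absorb them, as a pure-T instance has $W(I) = 0$ while $A$ may still occupy many GPUs. To handle this I would invoke Property 5: a T-GPU can exist only when every L- and M-GPU is at least $75\%$ utilised, while Property 2 forces every S-GPU to be $> 75\%$ full and Property 3 forces every non-terminal T-GPU to be $\ge 75\%$ full. Hence whenever T-GPUs are present the entire cluster is packed to $\ge 75\%$ of capacity up to $O(1)$ GPUs, so each T-GPU is charged against guaranteed surplus utilisation rather than against weight, keeping the net discrepancy bounded by a constant.

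I expect this last step---reconciling the weight-$0$ T-requests with the raw GPU count---to be the main obstacle. The per-GPU weight computations for L-, M-, and S-GPUs are routine once \autoref{theorem1} is invoked, but the T-GPU accounting is genuinely delicate: it is exactly here that Properties 3 and 5 interact to ensure T-GPUs only arise in a globally well-utilised configuration, and formalising ``well-utilised $\Rightarrow$ only $O(1)$ uncharged GPUs'' is where the additive constant $c$ ultimately originates.
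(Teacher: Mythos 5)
Your per-GPU weighting for the L-, M-, and S-GPUs is exactly the paper's argument, and your treatment of combined L-GPUs is in fact cleaner: you bound each such GPU directly by $5/6 + 1/3 = 7/6 \ge 1$, whereas the paper runs an averaging argument over the $\mathcal{C}$ combined L-requests whose stated arithmetic ($\frac{\mathcal{C}}{2}\cdot\frac{1}{3}+\frac{\mathcal{C}}{2}\cdot\frac{5}{6}=\mathcal{C}$) does not even check out; the direct per-GPU bound is the right way to say it. Up to the T-GPUs, then, your argument and the paper's coincide.

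The T-GPU step, however, is a genuine gap, and the fix you sketch cannot work. Charging a T-GPU ``against surplus utilisation'' yields a bound in terms of total request \emph{size}, not total \emph{weight}: an instance consisting only of T-requests has $W(I)=0$ while $A$ legitimately occupies arbitrarily many T-GPUs (each $75\%$ full but holding only weight-zero requests), so no additive constant $c$ can rescue $|A(I)| \le W(I)+c$. The inequality is simply false when the number of T-GPUs is unbounded. The paper does not repair this inside the lemma; it sidesteps it in the proof of \autoref{competitive-ratio}, which splits into two cases and only ever invokes \autoref{lemma1} in the case where $A(I)$ contains \emph{no} T-GPU. In the complementary case, Properties 3 and 5 give exactly the $75\%$-utilisation picture you describe, but it is cashed in directly as $|A(I)| \le \frac{4}{3}S(I) \le \frac{4}{3}\,OPT(I)$ (a volume bound against the optimum), bypassing weights entirely. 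So the correct move is not to strengthen your charging scheme but to add the hypothesis that $A(I)$ contains no T-GPU to the lemma (or restrict its application accordingly), at which point your argument for the remaining GPU classes already closes the proof.
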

\begin{proof}
    To prove $|A(I)| \le W(I)+c$, we need to prove the average weight of GPUs in $A(I)$ is greater or equal to 1.
    By the properties in \autoref{theorem1}, we know:
    \begin{itemize}
        \item Weight of M-GPU is $1/2+1/2 = 1$.
        \item Weight of S-GPU is $1/3+1/3+1/3 = 1$.
        \item Weight of an L-GPU containing a single L-request is $1$.
    \end{itemize}
    
    The following will prove that the average weight of L-GPUs handling M/S requests is greater than or equal to 1. By the definition of the couple L-requests, it is easy to see that at least $\lfloor \frac{\mathcal{C}}{2}\rfloor$ combined L-requests can fit with $\lfloor \frac{\mathcal{C}}{2}\rfloor$ M/S-requests. So the total weights of L-GPUs handling M/S requests are at least $\frac{\mathcal{C}}{2}*1/3+\frac{\mathcal{C}}{2}*5/6 = \mathcal{C}$. So $A(I)$ is bounded by $W(I)+c$.
\end{proof}

\begin{lemma}
\label{lemma2}
    Given request set $I$, $OPT(I)\le 3/4 W(I)$ where $OPT(I)$ is the optimal allocation of $I$.  
\end{lemma}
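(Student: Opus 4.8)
The goal is an upper bound on $\text{OPT}(I)$, so the natural route is constructive: exhibit one concrete feasible allocation $P$ of $I$, observe that $\text{OPT}(I)\le |P|$ since $\text{OPT}$ uses the fewest GPUs, and then argue $|P|\le \tfrac34 W(I)$. Because $\tfrac34 W(I)=W(I)/(4/3)$, this is equivalent to building a packing whose GPUs carry \emph{average} weight at least $4/3$. The plan is therefore to reduce the lemma to a bundling statement: partition the requests of $I$ into groups, each of total KV-cache size at most $C$ (so each group is a legal GPU under \eqref{eq:gpu_memory}) and of total weight at least $4/3$.

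First I would enumerate the weight-efficient bundles that a single GPU of capacity $C$ admits, reading off the size windows of the four categories. The configurations that attain weight $4/3$ are a combined L-request together with an M-request ($\tfrac56+\tfrac12=\tfrac43$) and two M-requests with one S-request ($\tfrac12+\tfrac12+\tfrac13=\tfrac43$), each legal whenever the participating sizes sum to at most $C$. I would then greedily form as many such bundles as possible, reusing the counting already set up for \autoref{lemma1}: pairing the $\mathcal{C}$ combined L-requests with the M/S-requests guaranteed to coexist with them, and collecting the remaining M- and S-requests into $2$M$+$S groups, so that the bulk of the instance is packed at weight exactly $4/3$ per GPU.

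The main obstacle is the residue that cannot be pushed up to weight $4/3$: a single L-request (weight $1$) whose size exceeds $2C/3$ cannot share its GPU with any M-request, an L-request paired only with an S-request tops out at $\tfrac56+\tfrac13=\tfrac76<\tfrac43$, a GPU holding three S-requests yields only weight $1$, and the weightless T-requests contribute nothing. Controlling this residue is exactly where the structural guarantees of \autoref{theorem1} must be invoked: property~4 certifies that an L-GPU left uncombined genuinely cannot absorb any M/S-request, and property~5 confines the weightless T-requests to GPUs whose L/M load is already high, so that they do not create extra under-filled GPUs. I would finish by verifying that, after discarding a constant number of exceptional GPUs, every remaining GPU in $P$ carries weight at least $4/3$, yielding $|P|\le\tfrac34 W(I)$ and hence $\text{OPT}(I)\le\tfrac34 W(I)$. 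I expect the delicate accounting of this under-filled L-and-T residue against the $4/3$ target to be the crux of the argument, since the clean per-GPU bound of $4/3$ is only attainable up to these boundary terms.
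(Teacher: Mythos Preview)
Your proposal is aimed at the wrong inequality. The statement of \autoref{lemma2} as printed has the direction reversed; what the proof of \autoref{competitive-ratio} actually uses is $W(I)\le \tfrac{4}{3}\,\text{OPT}(I)$, i.e.\ $\text{OPT}(I)\ge \tfrac{3}{4}W(I)$. The paper's own proof confirms this: it tabulates every feasible combination of request types that can occupy a single GPU in $\text{OPT}(I)$ and checks that the total weight of each such bin is at most $4/3$; summing over all GPUs of $\text{OPT}(I)$ then gives $W(I)\le \tfrac{4}{3}\,|\text{OPT}(I)|$. No packing is constructed and no property from \autoref{theorem1} is invoked.

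Your constructive route---build a packing $P$ whose GPUs each carry weight at least $4/3$ so that $\text{OPT}(I)\le|P|\le\tfrac{3}{4}W(I)$---cannot succeed, because the literal inequality $\text{OPT}(I)\le \tfrac{3}{4}W(I)$ is false in general. Take $I$ consisting of one L-request of size $0.9C$: then $\text{OPT}(I)=1$ while $W(I)=1$, so $\tfrac{3}{4}W(I)=\tfrac{3}{4}<1$. More structurally, the per-GPU weight table caps out at exactly $4/3$ (the $LM$ combination), so the \emph{maximum} achievable average is $4/3$, not the minimum; any instance containing a large single L-request, three S-requests that must share a GPU, or weightless T-requests forces strictly smaller per-GPU weight. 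Your appeal to properties~4 and~5 of \autoref{theorem1} does not rescue this: those properties describe the allocation $A(I)$ produced by the algorithm, not $\text{OPT}(I)$ and not your hand-built packing $P$, so they cannot constrain the residue you identify. The fix is simply to flip the direction and argue the per-bin upper bound of $4/3$ on weight, which is a short case enumeration over feasible $L/M/S/T$ combinations.
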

\begin{proof}
    \begin{table}[H]
        \centering
        \begin{tabular}{cc}
            \toprule
            Possible Combination & \makebox[3cm][c]{Weight}  \\
            \midrule
            L,LT,LTT & $1<4/3$ \\
            LM & $5/6+1/2 = 4/3$\\
            LS & $5/6+1/3 < 4/3$\\
            MM & $1/2+1/2 <4/3$\\
            MSS & $1/2+1/3+1/3 < 4/3$\\
            SSS & $1/3+1/3+1/3 < 4/3$\\
            \bottomrule    
        \end{tabular}
        \caption{Weight of GPUs in each type}
        \label{weight}
    \end{table}
    To prove $OPT(I)\le 3/4 W(I)$, the weight of all the possible GPUs in $OPT(I)$ is shown in \autoref{weight}
\end{proof}
\begin{proof}[Proof of Theorem \ref{competitive-ratio}]
    The allocation for the request list $I$ generated by algorithm $A$ is denoted as $A(I)$. The total size of requests $i\in I$ is $S(I) = \sum_{i\in I} S_i$. The proof will be divided into the following two separate cases:\\
    \textit{Case 1: There is T-GPU in $A(I)$.} 
    By the fifth property, all the L-GPU and M-GPU are at least $3/4$ full. And the S-GPU is also at least $3/4$ full since the size of S-requests is in $(C/4, C/3]$ and each S-GPU processes 3 S-requests. Thus all the GPUs are at least $3/4$ full (except a constant number of \textit{latest GPUs}). Therefore we have:
    $|P_A(I)|\le\frac{4}{3} S(I) \le \frac{4}{3} \textsc{OPT}(I)$.

    % \begin{subequations}
    %     \begin{align}
            
    %     \end{align}
    % \label{eq:proof_1-1}
    % \end{subequations}\\
    \textit{Case 2: There is no T-GPU in $A(I)$}.
    From ~\autoref{lemma1} and ~\autoref{lemma2}, we can conclude that the inequality $|A(I)| \le W(I)+c \le 4/3\cdot OPT(I)+c$ holds.
\end{proof}

\begin{theorem}
\label{theorem3}
    Given a GPU allocation $A(I)$ of requests set $I$, which fulfils all the properties in \autoref{lemma1}, the maximum number of request migrations caused by an operation is ten.
\end{theorem}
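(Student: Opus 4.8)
The plan is to bound the migration count by a structured case analysis over the three request operations---\emph{Allocate}, \emph{Depart}, and \emph{Update}---detailed in \autoref{fig:operations}, together with an argument that the recursive re-allocations they spawn terminate after a constant number of levels. The starting observation is that the properties of \autoref{theorem1} force every GPU to hold only a bounded number of requests: an M-GPU holds at most two M-requests (possibly one extra T-request), an S-GPU holds at most three S-requests, and an L-GPU holds one L-request together with at most one S/M-request and a constant number of T-requests (since an L-request exceeds $C/2$, leaving less than $C/2$ of free space). Hence whenever an operation issues a ``depart and re-allocate all other requests in $j$'' instruction, it can only generate a constant number of re-allocation calls.

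Next I would count, operation by operation, the direct moves plus the moves induced by the re-allocate/depart calls each operation makes. The cheap cases are \emph{Allocate} of a T- or S/M-request, which at worst evicts and re-places a single T-request from a target L-GPU, and \emph{Depart} of a T- or S/M-request, which pulls in one replacement request and possibly re-seats one displaced T-request. The expensive cases are (i) \emph{Depart} of an L-request, which departs and re-allocates every remaining request on the L-GPU, and (ii) the ``fulfil $j'$ with S/M-request from the most recently activated S/M-GPU'' chains in the L-request allocation and S/M-request departure branches, which move one request into the vacated slot and then backfill the donor GPU. The crucial structural point is that the recursion does not blow up: re-allocating a T-request triggers no further migration, re-allocating an S/M-request triggers at most one T-request eviction, and re-allocating an L-request occurs only at the top level of a \emph{Depart}/\emph{Update}; thus the induced call tree has constant depth and constant branching at each node.

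To obtain the exact bound of ten, I would tally the single worst-case scenario, which I expect to be an \emph{Update} that promotes a request and forces an overloaded GPU to be emptied: because \emph{Update} is implemented as a \emph{Depart} followed by an \emph{Allocate}, its cost is the sum of the two cascades, and stacking the maximal backfill chain from the departure with the maximal eviction-and-refill chain from the subsequent re-allocations is what I expect to accumulate to ten migrations. Summing the direct moves with the at-most-one-level recursive moves across the emptied GPU's requests and the donor-backfill chain should give the figure, after which I would verify that every remaining branch stays strictly below it.

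The main obstacle will be accounting precisely for the backfill and eviction chains without double-counting: the instructions ``fulfil $j'$ with S/M-request from the most recently activated S/M-GPU'', ``re-allocate any T-request that may exist in $j$'', and ``depart and re-allocate all other requests in $j$'' can each trigger a nested move, and because \emph{Update} stacks a \emph{Depart} on top of an \emph{Allocate}, the two chains interact. The care needed is to show that these nested moves are pairwise disjoint (each migrates a distinct request) and that the recursion terminates immediately below the first level, so that the worst-case sum is exactly ten rather than merely $O(1)$; this bookkeeping, rather than any single inequality, is where the real work lies.
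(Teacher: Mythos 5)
Your plan follows essentially the same route as the paper's proof: a per-operation case analysis in which the properties of \autoref{theorem1} bound the number of requests each GPU can hold, each Allocate/Depart cascade is tallied to a small constant (the paper gets 2 for T-requests and 5 for M/S- and L-requests), and the worst case is an Update, which decomposes into a Depart followed by an Allocate for a total of $5+5=10$. The only remaining work is the explicit tally you defer to the end, which the paper carries out exactly along the lines you anticipate.
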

\begin{proof}
    We discuss the operations separately. Noted that allocation and departure in the latest GPU do not result in any request migration; therefore, in the following discussion, the scope of the non-latest GPU is discussed by default.
    \begin{itemize}
        \item Allocate/Depart T-request (2 migrations):
        It is easy to observe that no request migration will be caused by allocating T-request. 
        When departing a T-request from GPU $j$. 2 migration may caused by the departure: use another T-request $i_T$ to replenish GPU $j$ and use another T-request to replenish the T-GPU processing $i_T$.
        So allocate/depart T-request will lead to at most 2 migrations.
        \item Allocate/Depart M/S-request (5 migrations): 
        Allocate M/S-request to L-GPU may cause up to two T-requests to be removed and re-allocated (two T-requests in L-GPU).
        Departing an S-request from an S-GPU may cause one migration of another S-request from the latest S-GPU.
        Departing an M-request from M-GPU $j$ may cause 1 migration of another M-request from the latest M-GPU to $j$ and trigger a departure of T-request, i.e., in total 4 migrations.
        Departing M/S-request from L-GPU will need to find another M/S to fit in the L-GPU. Worst case will trigger a departure of M-request, which may cause at most 4 migrations, i.e., in total 5 migrations.
        \item Allocate/Depart L-request (5 migrations):
        The worst case of allocating an L-request triggers a departure of an M-request and migrates it to the new GPU, i.e. may cause up to 5 migrations. Departing an L-request may cause two T-requests or one M/S-request to be departed and reallocated. Therefore, it may cause up to 3 migrations.
        \item Update (10 migrations): The update operation will trigger a departure of the original type request and an allocation of the new type request, so the worst case is departing L-request (5 migrations) and allocating L-request (5 migrations), i.e., at most 10 migrations.
    \end{itemize}\end{proof}

\section{Experiment}
\label{sec:experiment}

\subsection{Implementation}

We implement a prototype of \textsc{Mell} on top of vLLM~\cite{vllm}, representing the state of the art in serving systems and offering some advanced features, including paged attention and continuous batching.
We deploy each instance by Ray~\cite{ray} actor to implement GPU workers that execute the LLM inference and schedule the instance.
The request migration of \textsc{Mell} is supported by the point-to-point GPU communication of tokens and KV cache in Gloo~\cite{gloo} similar to Llumnix~\cite{llumnix}.
% The scheduler monitors the status of GPUs and calls the online KV cache scheduling algorithm every 1 minute.

\subsection{Experimental Setup}

\textbf{Workloads.} We evaluate \textsc{Mell} based on LLaMA2 7B and 13B~\cite{touvron2023llamaopenefficientfoundation}, one of the most popular open-sourced LLMs. 
The LLM request inputs are based on LMSYS-Chat-1M~\cite{zheng2024lmsyschatm} and WildChat~\cite{zhao2024wildchat}, two datasets containing more than one million real conversations collected from chatbot applications. 
To simulate the state-of-the-art LLMs with long-context (e.g., GPT-4o~\cite{openai_model} and Claude 3.5 Sonnet~\cite{claude_model}), we scale up each conversation by a factor of ten.
The LLM request arrival pattern of the workload is generated from the following data. 
First, to simulate frequent, middle, and infrequent workloads, we use three Poisson distributions with a setting of $\lambda = 0.5$, $0.8$, and $1.1$, respectively. 
Second, we use the traces from multiple LLM inference services in Azure~\cite{patel2024splitwise} collected on November 11th 2023, to simulate production-like workload arrival patterns and characteristics.

%small-scale testbed ... large-scale trace-driven simulation
% \textbf{Node setup.} We deploy \textsc{Mell} in two test beds, i.e., a GPU cluster equipped with NVIDIA GeForce RTX 4090 GPUs, each with 24 GB memory, and a GPU cluster equipped with 4 NVIDIA A100 GPUs, each with 40 GB.
\textbf{Node setup.}
We deploy \textsc{Mell} on a small-scale GPU cluster, including 8 NVIDIA GeForce RTX 4090 GPUs, each with 24 GB of memory, and 4 NVIDIA A100 GPUs, each with 40 GB of memory. 
The intra-machine GPU communication is PCIe 4.0, while the inter-machine GPU communication is 10 Gbps.
We use this testbed to collect traces of request processing speeds and inter-GPU bandwidth under different workloads.
We use this testbed to collect traces of request processing speeds and inter-GPU bandwidth under various workloads. 
We then use them to simulate the deployment of \textsc{Mell} on a large-scale cluster for evaluation.
% to evaluate performance under conditions that extensive real-world scenarios.

\begin{figure}[t]
	\centering
	\subfloat[][LLaMA-13B on NVIDIA A100]{
		\begin{minipage}[t]{\linewidth}
			\centering
			\includegraphics[width=0.95\linewidth]{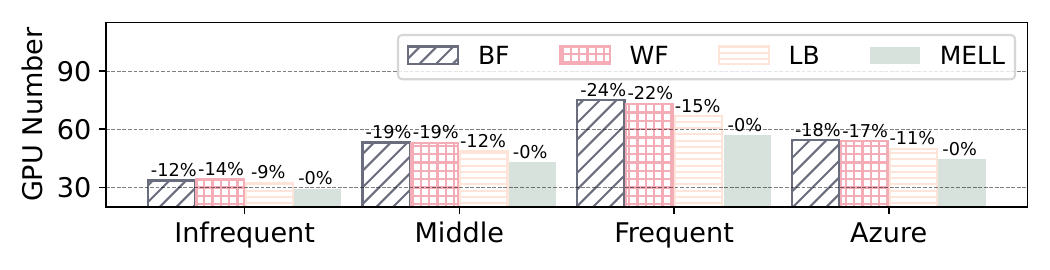}
		\end{minipage}
		\label{fig:GPU_num_A100_13B}
    	}
	\\
	\subfloat[][LLaMA-7B on NVIDIA 4090]{
		\begin{minipage}[t]{\linewidth}
			\centering
			\includegraphics[width=0.95\linewidth]{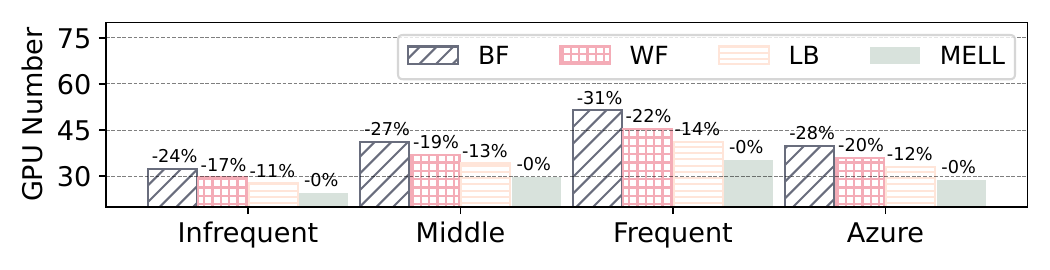}
		\end{minipage}%
		\label{fig:GPU_num_4090_7B}
	}
	\centering
	\caption{The number of GPUs needed by different systems under the Poisson and Azure workloads. The number above the bar denotes the difference between the baseline and \textsc{Mell}. 
    }
    % (a) for NVIDIA A100 serving LLaMA-13B and (b) for NVIDIA 4090 serving LLaMA-7B.}
    \label{fig:GPU_num}
\end{figure}

\textbf{Baseline.} To evaluate the efficiency of \textsc{Mell}, we have conducted a comparative analysis with the following algorithms.
(1) BF: A scheduling algorithm dispatches each incoming request to the GPU with the least but sufficient memory (i.e., \underline{B}est-\underline{F}it) and does not migrate running requests between GPUs.
(2) WF: A scheduling algorithm dispatches each incoming request to the least memory (i.e., \underline{W}orst-\underline{F}it) and does not migrate running requests between GPUs.
A similar algorithm is widely adopted by existing LLM serving systems~\cite{DistServe,patel2024splitwise}.
(3) LB: A scheduling algorithm dispatches incoming requests to the GPU with the least memory (i.e., worst fit) and achieves \underline{L}oad-\underline{B}alancing via request migration by transferring the KV cache between GPUs, adopted by LLumnix~\cite{llumnix}.
These algorithms activate a new GPU if no GPU can handle an incoming request and terminate a GPU if it is idle. We implement these algorithms in our system to ensure a targeted comparison for scheduling requests.

\textbf{Metrics.} We evaluate \textsc{Mell} and the baselines based on the following metrics. (1) The number of GPUs required by the LLM service provider to serve user requests. (2) The migration frequency (i.e., migrations per second) required by the LLM serving system to fully utilize idle GPU memory. (3) The GPU utilization (i.e., percentage of GPU memory in use).

\subsection{Results}

\begin{figure}[t]
	\centering
	\subfloat[][LLaMA-13B on NVIDIA A100]{
		\begin{minipage}[t]{\linewidth}
			\centering
			\includegraphics[width=0.95\linewidth]{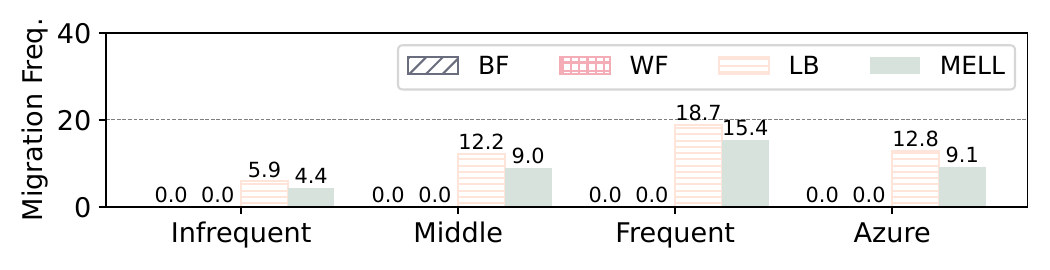}
		\end{minipage}
		\label{fig:A100_13B_move_num}
	}
	\\
	\subfloat[][LLaMA-7B on NVIDIA 4090]{
		\begin{minipage}[t]{\linewidth}
			\centering
			\includegraphics[width=0.95\linewidth]{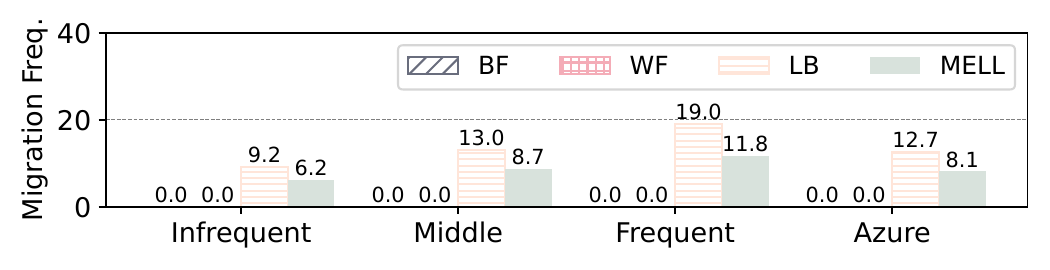} %纵坐标改migration frequency
		\end{minipage}% 
		\label{fig:4090_7B_move_num}
	}
	\centering
	\caption{The migration frequency in different systems under the Poisson and Azure workloads. }
    \label{fig:move_num}
\end{figure}

We evaluate the number of GPUs required by various LLM serving systems under different workloads, as shown in \autoref{fig:GPU_num}. 
While the BF and WF algorithms perform similarly, the LB algorithm outperforms both because it supports the request migration, balances GPU load, and improves GPU utilization. 
\textsc{Mell} combines the advantages of request migration with a design that reduces GPU fragmentation, reducing GPU demand by up to 15\% compared to LB and over 20\% compared to BF and WF. 
\textsc{Mell} increases its advantage as the workload's frequency grows. It is because, as the requirement of GPU grows, the existing algorithms result in more fragmented space, providing \textsc{Mell} greater room for optimization.
Moreover, this improvement is particularly evident when using the 4090 GPU for LLaMA-7B, where the limited KV cache storage causes larger fluctuations in GPU demand, emphasizing the need for an online KV cache scheduling algorithm of \textsc{Mell}.

We evaluate the migration frequency in various LLM serving systems under different workloads.
As illustrated in \autoref{fig:move_num}, \textsc{Mell} consistently exhibits a lower migration frequency than LB, because \textsc{Mell} is designed with an upper limit on the number of migrations according to \autoref{theorem3}. The long-term consideration inherent to the scheduling process represents a significant advantage of \textsc{Mell} over LB. 
Additionally, the proposed operation batching can effectively reduce the incidence of unnecessary migrations.
As seen in \autoref{fig:batch}, the technique reduces the number of migrations by up to $30\%$ under the Poisson workloads and $25\%$ under the Azure workload.
Also, only LB and \textsc{Mell} support migration of running requests; BF and WF do not support migration, so their migration frequency is zero.

\begin{figure}[t]
	\centering
	\subfloat[][LLaMA-13B on NVIDIA V100]{
		\begin{minipage}[t]{0.45\linewidth}
			\centering
			\includegraphics[width=\linewidth]{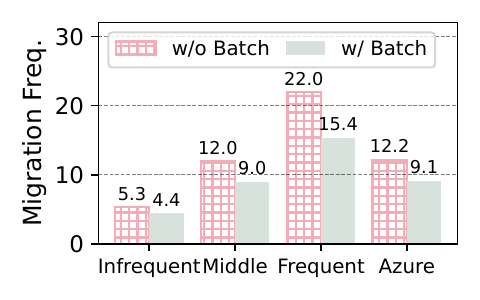}
		\end{minipage}
		\label{fig:A100_Batch}
	}
    \hfill
	\subfloat[][LLaMA-7B on NVIDIA 4090]{
		\begin{minipage}[t]{0.45\linewidth}
			\centering
			\includegraphics[width=\linewidth]{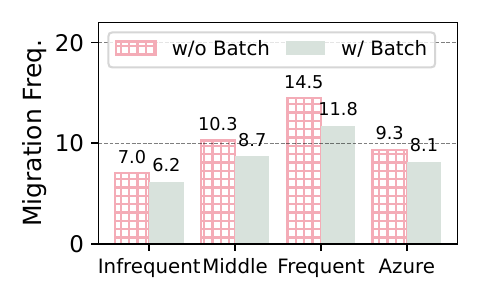}
		\end{minipage}%
		\label{fig:4090_Batch}
	}
	\centering
	\caption{The performance improvement caused by request operation batching in \textsc{Mell} under different workloads.}
    \label{fig:batch}
\end{figure}

\begin{figure}[t]
	\centering
	\subfloat[][LLaMA-13B on NVIDIA A100]{
		\begin{minipage}[t]{\linewidth}
			\centering
			\includegraphics[width=0.95\linewidth]{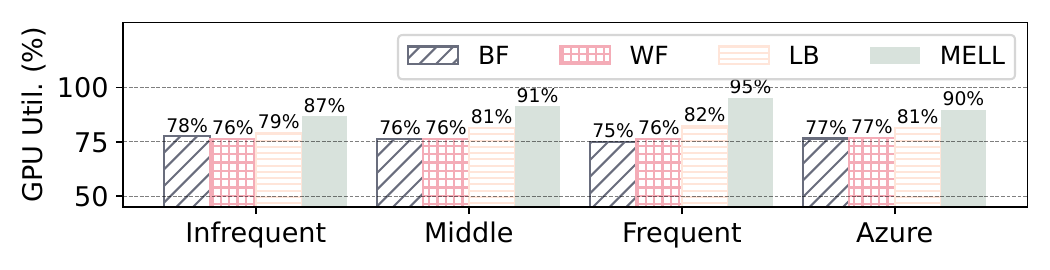}
		\end{minipage}
		\label{fig:A100_13B_move_num}
	}
	\\
	\subfloat[][LLaMA-7B on NVIDIA 4090]{
		\begin{minipage}[t]{\linewidth}
			\centering
			\includegraphics[width=0.95\linewidth]{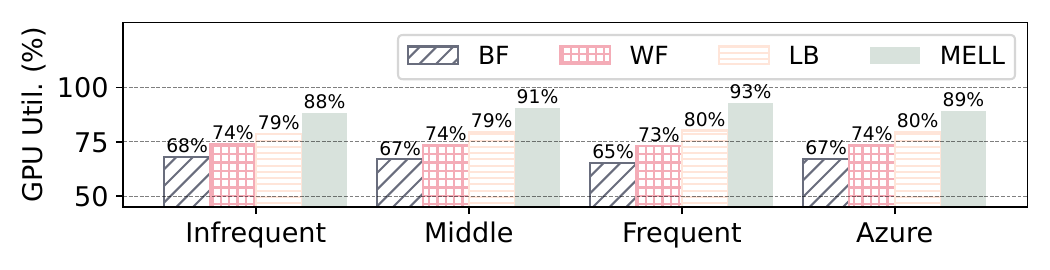} %纵坐标改migration frequency
		\end{minipage}%
		\label{fig:4090_7B_load}
	}
	\centering
	\caption{The GPU utilization in different systems under the Poisson and Azure workloads.}
    \label{fig:gpu_utilization}
\end{figure}

We evaluate the GPU utilization in various LLM serving systems under different workloads.
As shown in \autoref{fig:gpu_utilization}, \textsc{Mell} consistently achieves the highest GPU utilization across various workloads, with its lowest average at $88\%$. 
In comparison, the peak GPU utilization for other algorithms is around $80\%$, while BF's GPU utilization is as low as $65\%$ in the Poisson workload with high arrival frequency. 
\textsc{Mell} improves the GPU utilization by $8\% \sim 28\%$ compared with the existing systems.
This stark difference underscores the substantial GPU memory waste attributable to fragmented storage spaces. 
\textsc{Mell} addresses this inefficiency by effectively consolidating fragmented spaces through targeted migration requests, optimizing GPU resource utilization.

\autoref{fig:GPU_usage} shows the record of GPU usage of each system under the Poisson workload. 
All algorithms exhibit similar performance in the initial phase. It is evident that there are considerable fluctuations during the service phase, and the fluctuation trends are broadly similar across differing strategies. However, \textsc{Mell} consistently maintains the lowest GPU requirements throughout all these fluctuations.

\begin{figure}[t]
	\centering
	\subfloat[][LLaMA-13B on NVIDIA A100]{
		\begin{minipage}[t]{\linewidth}
			\centering
			\includegraphics[width=0.95\linewidth]{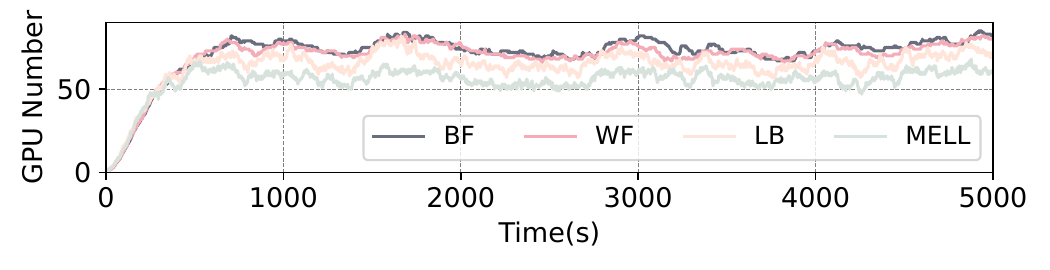}
		\end{minipage}
		\label{fig:A100_13B_gpu_usage}
	}\vspace{-0.2cm}
        \\
	\subfloat[][LLaMA-7B on NVIDIA 4090]{
		\begin{minipage}[t]{\linewidth}
			\centering
			\includegraphics[width=0.95\linewidth]{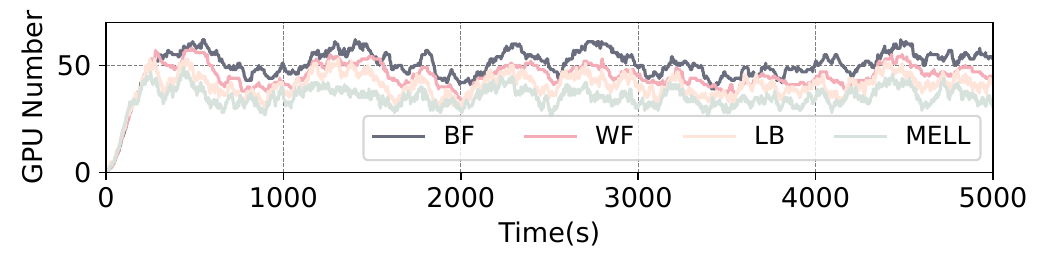} %纵坐标改migration frequency
		\end{minipage}%
		\label{fig:4090_7B_move_num}
	}
	\centering
	\caption{The number of GPUs at different times in each systems under Poisson workload with high arrival frequency.}
    \label{fig:GPU_usage}
\end{figure}

\section{Conclusion}

This paper proposes \textsc{Mell}, a memory-efficient LLM serving system via multi-GPU KV cache management. 
The system comprises an adaptive request migration mechanism for dynamic resource levels and an online KV cache scheduling algorithm that reduces the number of GPUs with limited request migration.
% It has been rigorously proved to have a competitive ratio with the optimal solution of 4/3 at most.
We implement a prototype of \textsc{Mell} on LLaMA and vLLM and evaluate it based on real chatbot conversations.
The results show that \textsc{Mell} reduces the number of GPUs by $9\% \sim 31\%$ and increases the GPU utilization by $10\% \sim 43\%$ on a Poisson simulated workload and a real workload from Azure compared to the existing LLM serving systems.
In future work, we will investigate multi-GPU KV cache management for LLMs with parameter sizes that exceed the capacity of a single GPU.

\clearpage
\bibliographystyle{IEEEtran}
\bibliography{ref}

\end{document}